\documentclass{article}
\usepackage{ijcai26}

\usepackage{times}
\usepackage{soul}
\usepackage{url}
\usepackage[hidelinks]{hyperref}
\usepackage[utf8]{inputenc}
\usepackage[small]{caption}
\usepackage{graphicx}
\usepackage{amsmath}
\usepackage{amsthm}
\usepackage{booktabs}
\usepackage[switch]{lineno}

\usepackage{amssymb}
\usepackage{algorithm}
\usepackage[noend]{algpseudocode}
\algnewcommand\algorithmicinput{\textbf{Input:}}
\algnewcommand\AlgInput{\item[\algorithmicinput]}
\algnewcommand\algorithmicoutput{\textbf{Output:}}
\algnewcommand\AlgOutput{\item[\algorithmicoutput]}
\algnewcommand\algorithmicforeach{\textbf{for each}}
\algdef{S}[FOR]{ForEach}[1]{\algorithmicforeach\ #1\ \algorithmicdo}
\algrenewcommand\algorithmicindent{1em}
\usepackage{tikz}
\usetikzlibrary{calc}
\usepackage{multirow}
\usepackage{xurl}    
\usepackage{pifont}
\usepackage{bm}
\usepackage[capitalize]{cleveref}
\usepackage{siunitx}
\usepackage{color}

\newcommand{\noshow}[1]{}

\newtheorem{theorem}{Theorem}
\newtheorem{lemma}{Lemma}

\Crefname{figure}{Figure}{Figures}
\crefname{figure}{Fig.}{Fig.}
\crefname{theorem}{Thrm.}{Thrm.}
\Crefname{theorem}{Theorem}{Theorems}
\Crefname{algorithm}{Algorithm}{Algorithms}
\crefname{algorithm}{Alg.}{Alg.}
\crefname{line}{Line}{Lines}
\crefname{table}{Table}{Tables}
\crefname{section}{Sec.}{Sec.}
\Crefname{section}{Section}{Sections}
\crefname{definition}{Def.}{Def.}
\Crefname{definition}{Definition}{Definitions}
\crefname{lemma}{Lemma}{Lemmas}
\Crefname{lemma}{Lemma}{Lemmas}
\crefname{proposition}{Prop.}{Prop.}
\Crefname{proposition}{Proposition}{Propositions}
\crefname{observation}{Observation}{Observations}
\Crefname{observation}{Observation}{Observations}
\crefname{equation}{Eq.}{Eq.}
\Crefname{equation}{Equation}{Equations}
\Crefname{part}{Part}{Parts}
\crefname{part}{Part}{Parts}
\Crefname{chapter}{Chapter}{Chapters}
\crefname{chapter}{Chap.}{Chap.}
\crefname{appendix}{Appendix}{Appendices}
\Crefname{appendix}{Appendix}{Appendices}

\title{Conveyor Parcel Routing with Order-Contiguous Arrivals}

\author{
Takuro Kato$^1$
\And
Keisuke Okumura$^2$\\
\affiliations
$^1$Toyota Industries Corporation, Japan\\
$^2$National Institute of Advanced Industrial Science and Technology (AIST), Japan\\
\emails
takuro.kato@mail.toyota-shokki.co.jp,
okumura.k@aist.go.jp
}

\begin{document}

\maketitle

\begin{abstract}
In warehouse logistics, parcels released from the outfeed of an automated storage system must be routed through conveyor networks to workstations. 
Beyond collision avoidance, practical operations impose an additional requirement of \emph{order-contiguous arrivals}: at each delivery point, parcels belonging to the same order must arrive as a consecutive block in the arrival sequence to reduce downstream re-sorting effort. 
We formalize this problem as \emph{online multi-agent path finding with order-contiguity (online MAPF-OC)}, where agents (i.e., parcels) appear over time and exit upon delivery. 
To efficiently solve online MAPF-OC, we propose \emph{Dual-Ordering Prioritized Planning (DOPP)}, a complete polynomial-time algorithm with a three-level structure that \emph{(i)} searches order-level arrival sequences, \emph{(ii)} refines agent-level priorities, and \emph{(iii)} synthesizes feasible solutions via prioritized planning. 
Experiments on various conveyor-network layouts, including those derived from actual warehouses, demonstrate DOPP's practical scalability and ability to generate high-quality plans within tight time budgets.
\end{abstract}

\section{Introduction}\label{sec:introduction}
{
\begin{figure}[t]
  \centering
  \includegraphics{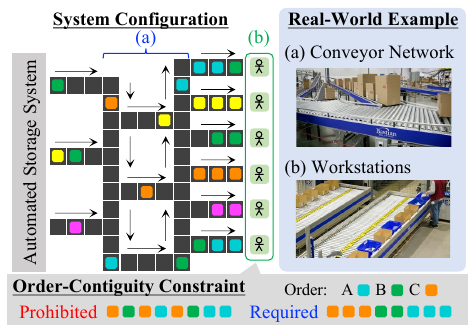}
  \caption{Conveyor-based fulfillment system in warehouse logistics. Parcels retrieved at the outfeed of an automated storage system (e.g., grid-based robotic storage systems~\protect\cite{salzman20research,chen21integrated}) enter a directed one-way conveyor network and are routed online to workstations under collision avoidance. The \emph{order-contiguity} constraint requires parcels of the same order to arrive contiguously (i.e., without interleaving by other orders) at each workstation. Photos sourced from \protect\cite{bastian_examples}.
  }
  \label{fig:overview}
\end{figure}
}
In modern fulfillment centers, parcels retrieved at the outfeed of an automated storage system are injected into the conveyor network and delivered to workstations.
\Cref{fig:overview} illustrates a typical setup.
While conveyors are a standard component of intra-facility transport, networks with many merges and splits are prone to congestion and blocking.
Thus, even if the upstream storage system releases parcels quickly and the terminal workstations process parcels sufficiently fast, overall throughput can still be constrained by delays within the intermediate conveyor network.
Therefore, efficient outfeed-to-workstation routing is crucial for system performance.

Such a scenario can be naturally abstracted as \emph{multi-agent path finding (MAPF)}~\cite{stern19mapf}, which has been widely studied in warehouse automation to synthesize the collision-free motion of many robots performing upstream retrieval.
However, parcel routing on conveyor networks, which often form the downstream transport layer, has received comparatively less attention.
We model this transport as \emph{online MAPF}~\cite{svancara19online}, where agents (i.e., parcels) are released over time at the storage outfeed, traverse the directed conveyor network under collision avoidance, are processed at workstations by human/robotic operators, and then exit.

Beyond collision avoidance captured by online MAPF, practical fulfillment operations impose an additional requirement: parcels must be grouped by \emph{order} and handled together at the same workstation. 
If parcels of different orders interleave in the arrival sequence, operators must re-sort them, increasing workload and processing delay.
This poses a novel \emph{order-contiguity} constraint: at each workstation, parcels with the same order must arrive consecutively, without interleaving with other orders, thereby forming a consecutive block.

To this end, we formalize \emph{online MAPF with order-contiguity (online MAPF-OC)}.
To solve this, we propose a search-based anytime algorithm, called \emph{Dual-Ordering Prioritized Planning (DOPP)}, consisting of a three-level optimization process:
\emph{(i)}~it first searches over order-level priorities to determine the arrival sequence of orders at workstations;
\emph{(ii)}~it refines priorities within orders at the agent level; and
\emph{(iii)}~given these priorities, it synthesizes collision-free and order-contiguous paths via prioritized planning (PP)~\cite{erdman87multiple,silver05cooperative}.
DOPP is complete and runs in polynomial time; it quickly constructs a feasible solution and then iteratively improves the makespan over time.
We evaluate DOPP on various conveyor-network maps, including realistic layouts derived from operational warehouses, demonstrating its practical scalability within tight time budgets and strong makespan performance relative to an ad-hoc reactive policy.

\section{Related Work}
\paragraph{Parcel transport.}
Routing and control for baggage/parcel transport systems have been widely studied, including approaches for online route choice and flow coordination in track-like networks~\cite{tarau10modelbase,zeinaly15integrated}.
These approaches typically focus on coarse routing decisions, while junction-level coordination relies on local rule-based checks~\cite{klotz13automated}, thereby limiting opportunities for globally efficient scheduling.
For conveyor systems, often targeting specific conveyor layouts, several optimization-based schemes have been studied~\cite{ago07simultaneous,novak23optimization}, which reduce routing to structured problems solvable by off-the-shelf solvers, such as MILP or network-flow formulations.
However, this direction introduces a large number of variables and is computationally prohibitive, making it unsuitable for online planning at scale.
This motivates our search-based MAPF approach for online synthesis of collision-free parcel trajectories on general directed graphs.

\paragraph{Multi-agent path finding (MAPF)} is a generic abstraction for coordinating multiple entities on a shared graph under collision avoidance with prominent applications in warehouse automation~\cite{stern19mapf}.
To date, various MAPF solvers have been proposed, ranging from computationally heavy but optimal approaches~\cite{sharon15conflict} to scalable but suboptimal approaches~\cite{okumura2023lacam}.
To capture persistent warehouse-like operations, \emph{lifelong MAPF} has been proposed, where each agent receives a new goal upon reaching its current one~\cite{ma17lifelong}.
\emph{Online MAPF} models a stream of agents that appear over time, while previously revealed agents may already be executing their plans~\cite{svancara19online}.
Our conveyor routing problem inherits this lifelong/online nature but additionally introduces an operational requirement on the arrival sequence at each destination.
As another closely related variant, \emph{MAPF-PC (with precedence constraints)}~\cite{zhang22precedence} assumes that each agent is given a sequence of goals together with precedence constraints between goal-completion events.
In contrast, our focus, MAPF-OC, imposes an order-level non-interleaving constraint at each destination: parcels sharing an order must arrive as a contiguous block, while the order of these blocks is a decision variable optimized for the makespan objective.
Beyond warehouse robotics, MAPF-based techniques have been applied to broad domains, such as rail networks~\cite{li21scalable} and autonomous intersection coordination~\cite{li23intersection,yan24multi}; we similarly extend the application scope of MAPF to conveyor systems.

{
\begin{figure}[t]
  \centering
  \includegraphics{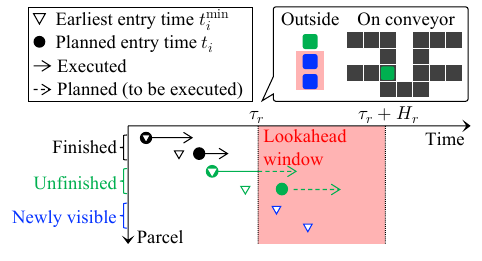}
  \caption{Online replanning at $\tau_r$. Executed trajectories up to $\tau_r$ are fixed and form the initial state of the snapshot. Visible agents are classified as \emph{finished} (already disappeared; black), \emph{unfinished} (still on the conveyor or not entered yet; green), and \emph{newly visible} (revealed within the lookahead $H_r$; blue). The snapshot at $\tau_r$ plans trajectories for unfinished and newly visible agents from $\tau_r$ onward.}
  \label{fig:snapshot_info}
\end{figure}
}
{
\begin{figure*}[t]
  \centering
  \newcommand{\wtab}{0.20\linewidth}
  \newcommand{\wfig}{0.80\linewidth}
  \begin{minipage}[t]{\wtab}
  \vspace{0pt}
    \centering
    \fontsize{8pt}{9pt}\selectfont
    Visible Agents \(\mathcal{A}^r\)\\
    {
    \setlength{\tabcolsep}{4pt}
    \renewcommand{\arraystretch}{0.65}
    \begin{tabular}{@{}ccccc@{}}
      \toprule
      $i$ & $o_i$ & $s_i$ & $g_i$ & $t_i^{\min}$ \\
      \midrule
      $1$ & \includegraphics[width=0.2cm]{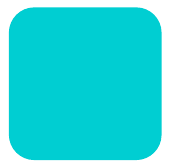} & $u^{\uparrow}$   & $v^{\downarrow}$ & $20$ \\
      $2$ & \includegraphics[width=0.2cm]{figure/orderA.pdf} & $u^{\uparrow}$ & $v^{\downarrow}$ & $28$ \\
      $3$ & \includegraphics[width=0.2cm]{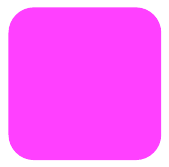} & $u^{\downarrow}$   & $v^{\downarrow}$ & $25$ \\
      $4$ & \includegraphics[width=0.2cm]{figure/orderB.pdf} & $u^{\downarrow}$   & $v^{\downarrow}$ & $29$ \\
      $5$ & \includegraphics[width=0.2cm]{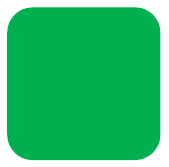} & $u^{\downarrow}$   & $v^{\uparrow}$   & $31$ \\
      $6$ & \includegraphics[width=0.2cm]{figure/orderC.pdf} & $u^{\uparrow}$ & $v^{\uparrow}$   & $34$ \\
      \bottomrule
    \end{tabular}
    }
  \end{minipage}\hfill
  \begin{minipage}[t]{\wfig}
  \vspace{1pt}
  \centering
  \includegraphics[width=\linewidth]{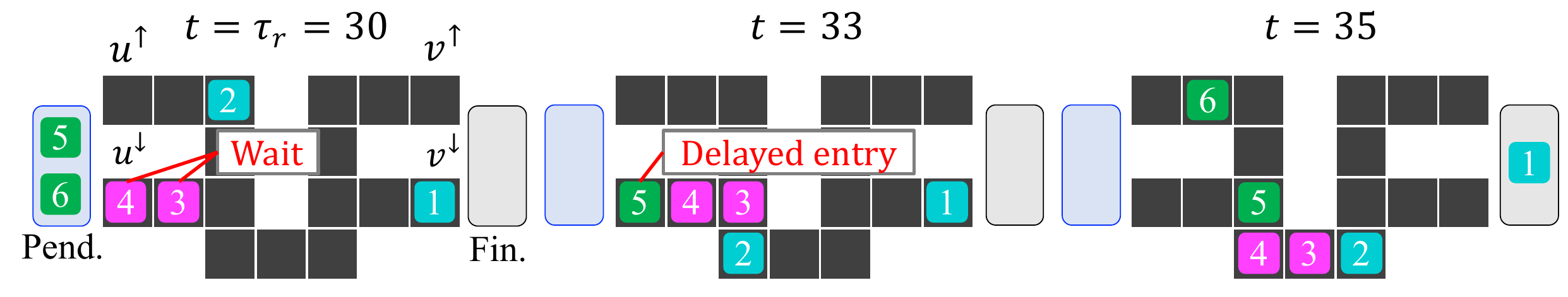}
  \end{minipage}

  \caption{Example of snapshot with $\tau_r=30$, $H_r=5$, and $k=4$. The colors correspond to order labels. Agent $1$ arrives at its destination at $t=30$ and stays in service during $t=31,\ldots,34$, and disappears at $t=35$. To preserve order-contiguity at destination $v^{\downarrow}$, agents $3$ and $4$ wait until agent $2$ passes the junction. This waiting blocks the entry cell, so although $t_{5}^{\min}=31$, the entry of agent $5$ is delayed to $t_{5}=33$. By $t=35$, the remaining parcels are queued so that future arrivals at destinations form consecutive order blocks (i.e., order-contiguous).}
  \label{fig:pd_illustration}
\end{figure*}
}

\section{Problem Definition}\label{sec:problem_definition}
This section formalizes MAPF-OC.
We present the system model and constraints, and define the offline/online problems.

\subsection{System Model}\label{sec:model}
\paragraph{Conveyor network.}
We model the conveyor network as a \emph{directed} graph $G=(V, E)$.
Each vertex $v\in V$ represents a unit-capacity conveyor segment, and each directed edge $(u,v)\in E$ represents one-way transport.
We distinguish \emph{entry} vertices $\mathcal{S}\subset V$, where parcels are injected into the conveyor network, and
\emph{destination} vertices $\mathcal{G}\subset V$, corresponding to workstations in the physical system
(leftmost and rightmost cells in \cref{fig:overview}, respectively).

\paragraph{Agents and orders.}
Each parcel corresponds to an agent $i\in\mathcal{A}$ that moves on $G$ in discrete time $t\in\mathbb{N}_{\ge0}$ and is associated with a tuple $\langle o_i, s_i, g_i, t_i^{\min} \rangle$, where $o_i\in\mathcal{O}$ is an order label, $s_i\in\mathcal{S}$ is an entry vertex, and $g_i\in\mathcal{G}$ is its destination.
We assume a fixed order-to-destination assignment $d:\mathcal{O}\to\mathcal{G}$ and set $g_i=d(o_i)$, and require that $g_i$ is reachable from $s_i$ in $G$.
Each agent has an \emph{earliest entry time} $t_i^{\min}$.
The actual entry time $t_i$ is a decision variable and can be any $t_i\ge t_i^{\min}$.
Upon reaching its destination, an agent must stay there for a \emph{service time} of $k$ steps and then leaves the network.
This models the handling time at the workstation (e.g., unloading/packing).
We assume a uniform $k$ for simplicity.

\subsection{Constraints}\label{sec:problem-formulation}
\paragraph{Time-dependent path} for an agent $i\in\mathcal{A}$ is a function $\pi_i:\mathbb{N}_{\ge0}\to V\cup\{\bot\}$, where $\pi_i(t)$ denotes the location at time $t$ and $\bot$ represents ``outside the network.''
We define the \emph{destination arrival time} $T_i:=\min\{t\mid \pi_i(t)=g_i\}$.
A path $\pi_i$ is \emph{valid} if there exists an entry time $t_i\ge t_i^{\min}$ such that
\begin{align*}
  &\pi_i(t)=\bot \ (t<t_i),\quad \pi_i(t_i)=s_i,\\
  &\pi_i(t+1)\in\{\pi_i(t)\}\cup\{v\mid(\pi_i(t),v)\in E\} (t_i < t < T_i),\\
  &\pi_i(t)=g_i \; (T_i\le t\le T_i+k),\quad \pi_i(t)=\bot \; (t > T_i+k).
\end{align*}
That is, an agent waits or moves to an outgoing neighbor each step, must stay at its destination during service, and then leaves the network.
A set of paths $\pi=\{\pi_i\mid i\in\mathcal{A}\}$ is \emph{collision-free} if
$|\{i\in\mathcal{A}\mid \pi_i(t)=v\}|\le 1$ for all $v \in V$ and $t \in \mathbb{N}_{\ge0}$.
Note that \emph{swap collisions} are impossible on a directed one-way network.

\paragraph{Order-contiguity.}
Given a destination $g\in\mathcal{G}$, let $(i_1,\ldots,i_m)$ be the agents with destination $g$ sorted by increasing arrival times, i.e., $T_{i_1}<\cdots<T_{i_m}$.
We say $\pi$ satisfies \emph{order-contiguity} at $g$ if for any $1\le p<q<r\le m$,
$o_{i_p}=o_{i_r} \Rightarrow o_{i_p}=o_{i_q}$.
Thus, the arrival sequence at each destination can be partitioned into consecutive order blocks.
The set of paths $\pi$ is \emph{order-contiguous} if it satisfies the above for all $g\in\mathcal{G}$.

\subsection{Offline MAPF with Order-Contiguity}\label{sec:oneshot_mapfoc}
The offline setting corresponds to the \emph{one-shot} MAPF problem: an instance specifies $G$ and a finite set of agents $\mathcal{A}$.
A set of paths $\pi=\{\pi_i\mid i\in\mathcal{A}\}$ is a \emph{feasible solution} if it is valid, collision-free, and order-contiguous.
We define the \emph{makespan} of $\pi$ as
$\max_{i\in\mathcal{A}}\,(T_i+k)$.

\subsection{Online MAPF with Order-Contiguity}\label{sec:online_mapfoc}
\paragraph{Batch-based replanning.}
In practice, conveyor systems operate continuously for extended periods (e.g., several hours to a full day).
Planning the entire operation in one-shot is impractical, both because the problem scale becomes enormous and because future order information may be incomplete or change over time.
We therefore replan in \emph{batches} over rounds $r\in\{1,\ldots,r_{\mathrm{end}}\}$.
Let $\tau_r$ be the replanning timestep of round $r$, with $\tau_1:=0$ and $\tau_{r+1}>\tau_r$.
We define the lookahead as $H_r:=\tau_{r+1}-\tau_r$.
At replanning time $\tau_r$, we call agents whose earliest entry time is before $\tau_r+H_r$, \emph{visible}, defined as:
\begin{equation}\label{eq:A-vis}
  \mathcal{A}^r:=\{i\in\mathcal{A}\mid t_i^{\min}<\tau_r+H_r\}.
\end{equation}
We assume that all agents of the same order are revealed within a single replanning window (\emph{single-window reveal}).
This assumption is not merely for modeling simplicity: without sufficient order information, the order-contiguity constraint can become ill-posed online, since newly revealed parcels of an already-processed order may force unavoidable violations of contiguity.
Formally, for each $o\in\mathcal{O}$, there exists an index $r_o$ such that
\begin{equation}\label{eq:single_window}
\tau_{r_o} \le t_i^{\min} < \tau_{r_o+1} \qquad \forall i \in \{ j\in\mathcal{A}\ \mid o_j=o\}.
\end{equation}

\paragraph{Snapshot problem.}
The \emph{initial state} of the snapshot at replanning time $\tau_r$ is given by the executed trajectories until $\tau_r$, as depicted in \cref{fig:snapshot_info}.
In addition, newly visible agents in the current lookahead window are given.
Agents that have already finished service and disappeared by $\tau_r$ remain known, but are excluded from replanning.
At replanning time $\tau_r$, a \emph{snapshot plan} $\pi^r$ is a feasible solution if
\emph{(i)} $\pi_i^r(t)=\pi_i^{r-1}(t)$ for all agents $i\in\mathcal{A}^{r-1}$ and all $t \le \tau_r$, and
\emph{(ii)} the resulting trajectories for agents in $\mathcal{A}^r$ are valid, collision-free, and order-contiguous over the entire time horizon.
Let $T_i^r$ denote the arrival time of agent $i$ under $\pi_i^r$, and define the \emph{snapshot makespan} as
$\max_{i\in\mathcal{A}^r}\,(T_i^r+k)$.
A feasible solution is \emph{snapshot-optimal} if it minimizes this makespan.
\Cref{fig:pd_illustration} illustrates examples of a snapshot instance and a feasible solution.

\paragraph{Lifelong objective.}
In this \emph{online} setting, agents keep becoming visible over replanning steps, and we assume the arrival stream eventually terminates.
We aim to minimize the completion time of the last agent, although no online algorithm can guarantee to achieve this global optimum in general~\cite{svancara19online}.
We therefore optimize the snapshot makespan at each replanning time as a practical surrogate for this objective.

\paragraph{Computational complexity.}
Finding a snapshot-optimal solution is NP-hard.
This follows from the NP-hardness of makespan-minimizing MAPF shown in~\cite{ma16multi}: their reduction can be adapted to our snapshot setting by considering a single replanning window and assigning each agent to a distinct order.
Thus, in this work, we do not aim to compute optimal solutions; rather, our focus is on establishing suboptimal yet scalable real-time planning.

\section{Dual-Ordering Prioritized Planning}
In MAPF-OC, agents can wait outside the network before entering and leave their destination after a bounded service time.
These structural conditions correspond to a \emph{well-formed} setting, under which \emph{prioritized planning (PP)} is guaranteed to find a feasible solution in polynomial time~\cite{cap15prioritized}.
Motivated by this, we develop a PP-based \emph{anytime} algorithm, termed \emph{Dual-Ordering Prioritized Planning (DOPP)}, which quickly constructs a feasible solution to the snapshot problem defined in \cref{sec:online_mapfoc}, then improves the makespan as time permits.
Repeated over replanning rounds, DOPP yields a lifelong execution; the offline setting is the special case of a single round.

We focus on PP for practicality, among the mainstream MAPF paradigms:
CBS and its extensions~\cite{sharon15conflict,li21eecbs} are ill-suited to real-time planning with hundreds of agents, while PIBT and its successors~\cite{okumura22priority,okumura2023lacam} are known to perform poorly in narrow corridors~\cite{okumura23lacam2}.
Besides, enforcing order-contiguity in PIBT necessitates conservative operation, which reduces efficiency, as we observe empirically in \cref{sec:evaluation}.

In the following, we first provide an overview of DOPP, and then describe initialization for obtaining a feasible plan, followed by theoretical properties and anytime refinement.

\subsection{Overview}
\paragraph{Structure.}
DOPP searches priority constraints at high level and runs PP at low level, similar to PBS~\cite{ma19pbs} for classical MAPF.
To enforce order-contiguity, DOPP maintains precedence constraints at both order level and agent level, yielding a three-level structure:
Level~3 searches \emph{order precedence}, Level~2 searches \emph{agent priority} subject to Level~3, and Level~1 applies PP under the resulting priorities.
Pseudocode and conceptual illustration are shown in \cref{alg:dopp} and \cref{fig:algorithm}, respectively.

\paragraph{Priority constraints.}
Let $\mathcal{O}^r\hspace{-1mm}:=\hspace{-1mm}\{o_i \mid i\in\mathcal{A}^r\}$ be the \emph{visible orders} at $\tau_r$, where $\mathcal{A}^r$ is the set of visible agents in Eq.~\eqref{eq:A-vis}.
DOPP maintains two sets of constraints:
an \emph{order-precedence set} $\bm{\prec}_{\mathcal{O}}^r$ over $\mathcal{O}^r$ and an \emph{agent-priority set} $\bm{\prec}_{\mathcal{A}}^r$ over $\mathcal{A}^r$.
An element $o\prec o'$ of $\bm{\prec}_{\mathcal{O}}^r$ (or $i\prec j$ of $\bm{\prec}_{\mathcal{A}}^r$) means that $o$ (resp. $i$) has higher priority than $o'$ (resp. $j$).

\paragraph{Replanning policy.}
In the online setting, while it is possible to replan priorities of all visible agents at each replanning time $\tau_r$, DOPP instead adopts a \emph{committed-suffix} policy to preserve well-formedness.
That is, agents that are already on the network at $\tau_r$ are given higher priority so that they are cleared from the network first.
The relative priorities among these agents are inherited from the previous plan for consistency.
As a result, DOPP optimizes priorities for agents/orders that have not entered the network at $\tau_r$ yet.

{
\begin{figure}[t]
  \centering
  \includegraphics{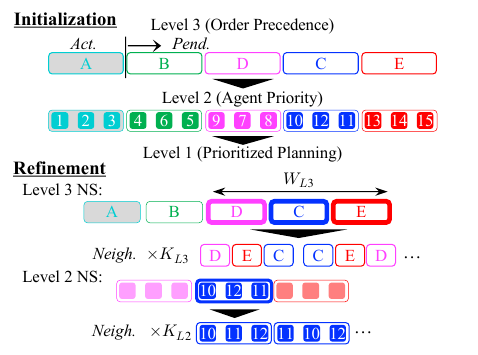}
  \caption{
  Workflow of DOPP.
  Initialization constructs an order precedence (Level~3) and an order-consistent agent priority (Level~2), and then computes a feasible snapshot solution by PP (Level~1).
  Refinement performs neighborhood search: Level~3 perturbs the pending-order subsequence (window permutation), while Level~2 perturbs the agent order within a selected pending order.
  }
  \label{fig:algorithm}
\end{figure}
}
{
\begin{algorithm}[t!]
\caption{High-level of DOPP (snapshot at $\tau_r$)}
\label{alg:dopp}
\small
\begin{algorithmic}[1]
\AlgInput Visible agents $\mathcal{A}^r$, previous snapshot solution $\pi^{r-1}$
    \State Initialize precedence constraints $\bm{\prec}_{\mathcal{O}}^r,\bm{\prec}_{\mathcal{A}}^r$\Comment{\Cref{sec:dopp_init}}\label{algo:dopp:init-precedence}
    \State Topologically sort $\bm{\prec}_{\mathcal{A}}^r$ to obtain $\mathbf{L}_{\mathcal{A}}^r$ (a total order over $\mathcal{A}^r$)
    \State $\pi^r \gets$ run Level~1 with $\mathbf{L}_{\mathcal{A}}^r$ and $\bm{\prec}_{\mathcal{O}}^r$
    \label{algo:dopp:init-level1}

    \While{$\neg\texttt{interrupt()}$}\label{algo:dopp:main-loop}
        \State Construct neighbor constraints $\tilde{\bm{\prec}}_{\mathcal{O}}^r,\tilde{\bm{\prec}}_{\mathcal{A}}^r$ \Comment{\Cref{sec:dopp_refine}}\label{algo:dopp:ns}
        \State Topologically sort $\tilde{\bm{\prec}}_{\mathcal{A}}^r$ to obtain $\tilde{\mathbf{L}}_{\mathcal{A}}^r$
        \State $\tilde{\pi}^r \gets$ run Level~1 with $\tilde{\mathbf{L}}_{\mathcal{A}}^r$ and $\tilde{\bm{\prec}}_{\mathcal{O}}^r$
        \If{$\mathrm{makespan}(\tilde{\pi}^r) < \mathrm{makespan}(\pi^r)$}
            \State $\pi^r \gets \tilde{\pi}^r$;\ \ 
            $\bm{\prec}_{\mathcal{O}}^r \gets \tilde{\bm{\prec}}_{\mathcal{O}}^r$;\ \
            $\bm{\prec}_{\mathcal{A}}^r \gets\tilde{\bm{\prec}}_{\mathcal{A}}^r$\label{algo:dopp:update-solution}
        \EndIf
    \EndWhile
    \State \Return $\pi^r$\Comment{Snapshot solution at $\tau_r$}
\end{algorithmic}
\normalsize
\end{algorithm}
}
\paragraph{Agents and orders classification.}
Herein, to illustrate DOPP, we call an agent \emph{active} if it is already on the conveyor at $\tau_r$; otherwise it is \emph{pending}:
\begin{equation}
\mathcal{A}_{\mathrm{act}}^r \hspace{-1mm}:=\hspace{-1mm} \{ i\in\mathcal{A}^r \hspace{-1mm}\mid\hspace{-1mm} \pi_i^{r-1}(\tau_r)\in V \},
\mathcal{A}_{\mathrm{pend}}^r \hspace{-1mm}:=\hspace{-1mm} \mathcal{A}^r\hspace{-0.5mm}\setminus\hspace{-0.5mm} \mathcal{A}_{\mathrm{act}}^r. \label{eq:class_agents}
\end{equation}
Likewise, an order is \emph{active} if it has at least one active agent; otherwise it is \emph{pending}:
\begin{equation}
\mathcal{O}_{\mathrm{act}}^r \hspace{-1mm}:=\hspace{-1mm} \{ o\in\mathcal{O}^r \hspace{-1mm}\mid\hspace{-1mm} \exists i \in \mathcal{A}_{\mathrm{act}}^r\hspace{-0.5mm}:\hspace{-0.5mm} o_i=o \},
\mathcal{O}_{\mathrm{pend}}^r \hspace{-1mm}:=\hspace{-1mm} \mathcal{O}^r\hspace{-0.5mm}\setminus\hspace{-0.5mm}\mathcal{O}_{\mathrm{act}}^r. \label{eq:class_orders}
\end{equation}

\subsection{Initialization}\label{sec:dopp_init}
We detail how to quickly compute an initial solution via PP (lines~\ref{algo:dopp:init-precedence}--\ref{algo:dopp:init-level1} in \cref{alg:dopp}).
As PP conducts agent-wise path planning based on priorities assigned to each agent, we need to obtain a total ordering over agents, while being careful about maintaining the feasibility of PP.
The Level~2 and Level~3 procedures construct such an order.

\paragraph{Level 3: order precedence.}
We initialize the order-precedence constraints $\bm{\prec}_{\mathcal{O}}^r$ as follows:
\emph{(i)}~place all active orders before pending ones to avoid breaking ongoing arrival blocks, 
\emph{(ii)}~inherit the relative precedence among active orders from the previous round, and
\emph{(iii)}~set precedence among pending orders by an earliest-entry heuristic.
Formally, let $\bm{\prec}_{\mathcal{O}}^{-1}:=\emptyset$ for $r=0$.
For $r\ge 0$, we define $\bm{\prec}_{\mathcal{O}}^r$ as the union of the following constraints:
\begin{align}
\bm{\prec}_{\mathcal{O}}^r
:=
\{& o\prec o' \mid o\in\mathcal{O}_{\mathrm{act}}^r, o'\in\mathcal{O}_{\mathrm{pend}}^r \}\label{eq:precO_activefirst}\\
\cup \{& o\prec o' \mid o,o'\in\mathcal{O}_{\mathrm{act}}^r, (o\prec o')\in \bm{\prec}_{\mathcal{O}}^{r-1} \}\label{eq:precO_active}\\
\cup \{& o\prec o' \mid o,o'\in\mathcal{O}_{\mathrm{pend}}^r, \rho_o < \rho_{o'} \}, \label{eq:precO_pending}
\end{align}
where $\rho_o:=\min\{\,t_i^{\min}\mid i\in\mathcal{A}^r,\ o_i=o\,\}$.
Ties in $\rho_o$ are broken uniformly at random.

\paragraph{Level 2: agent priority.}
We initialize the agent-priority constraints $\bm{\prec}_{\mathcal{A}}^r$ to be consistent with the order precedence and to preserve feasibility across replanning rounds.
Intuitively, we
\emph{(i)}~enforce order-consistency, i.e., agents of higher-precedence orders are planned before those of lower-precedence orders;
then, \emph{(ii)}~for \emph{active} orders $o\in\mathcal{O}_{\mathrm{act}}^r$, those having at least one agent on the conveyor at $\tau_r$, inherit the pairwise precedence among their agents from the previous round; and
\emph{(iii)}~for \emph{pending} orders $o\in\mathcal{O}_{\mathrm{pend}}^r$, those having no agent on the conveyor at $\tau_r$, initialize precedence among their agents with $t_i^{\min}$.
Formally, let $\bm{\prec}_{\mathcal{A}}^{-1}:=\emptyset$ for $r=0$.
For $r\ge 0$, define $\bm{\prec}_{\mathcal{A}}^r$ as the union of the following constraints:
\begin{align}
\bm{\prec}_{\mathcal{A}}^r
:=
\{& i\prec j \mid i,j \in \mathcal{A}^r, (o_i \prec o_j) \in \bm{\prec}_{\mathcal{O}}^r \}\label{eq:precA_inherit}\\
\cup \{& i\prec j \mid o_i=o_j,o_i\in\mathcal{O}_{\mathrm{act}}^r, (i\prec j)\in\bm{\prec}_{\mathcal{A}}^{r-1} \}\label{eq:precA_active}\\
\cup \{& i\prec j \mid o_i=o_j,o_i\in\mathcal{O}_{\mathrm{pend}}^r, t_i^{\min} < t_j^{\min} \}.\label{eq:precA_pending}
\end{align}
We then apply topological sort to $\bm{\prec}_{\mathcal{A}}^r$ in order to obtain a total ordering over $\mathcal{A}^r$, denoted by $\mathbf{L}_{\mathcal{A}}^r$, which is used in Level~1.
Ties in $t_i^{\min}$ are broken uniformly at random.

\paragraph{Level 1: PP.}\label{sec:dopp_level1}
Level~1 synthesizes a snapshot solution by applying standard PP under $\mathbf{L}_{\mathcal{A}}^r$:
agents are planned sequentially using space-time A*~\cite{silver05cooperative} while avoiding previously planned paths.
We must additionally enforce order-contiguity by applying \emph{destination blocking},
which forbids agent $i$ from occupying its destination $g_i$ before time $\Gamma_i$:
\[
\Gamma_i := \max\{T_j^r + k + 1 \mid j\in\mathcal{A}^r, g_j=g_i, (o_j \prec o_i) \in \bm{\prec}_{\mathcal{O}}^r \}.
\]
If the set is empty, we set $\Gamma_i:=0$.
Since $\mathbf{L}_{\mathcal{A}}^r$ respects $\bm{\prec}_{\mathcal{O}}^r$, this yields order-contiguous arrivals.

\subsection{Theoretical Analysis}\label{sec:theoretical_analysis}
DOPP's initialization in \cref{sec:dopp_init} ensures that the snapshot instance at each $\tau_r$ falls into a well-formed setting: agents may wait outside the network, and destinations are released after bounded service.
This reduces the snapshot feasibility of MAPF-OC to feasibility under PP (Level~1).
Therefore, running Level~1 once under the initialized priorities yields a feasible snapshot solution in polynomial time.
Full proofs are provided in Appendix.

\begin{lemma}\label{lem:precO_total}
For every round $r$, $\bm{\prec}_{\mathcal{O}}^r$ and $\bm{\prec}_{\mathcal{A}}^r$ admit a total ordering over $\mathcal{O}^r$ and $\mathcal{A}^r$, respectively, via topological sorting.
\end{lemma}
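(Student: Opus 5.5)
We argue by induction on the round $r$, showing simultaneously that $\bm{\prec}_{\mathcal{O}}^r$ and $\bm{\prec}_{\mathcal{A}}^r$ are acyclic relations on $\mathcal{O}^r$ and $\mathcal{A}^r$. Acyclicity is exactly what topological sorting needs to return a linear extension, i.e., a total ordering. The induction hypothesis will be slightly stronger: the previous round's sets $\bm{\prec}_{\mathcal{O}}^{r-1}$ and $\bm{\prec}_{\mathcal{A}}^{r-1}$ are strict partial orders, meaning irreflexive and acyclic. For the base case we use $\bm{\prec}^{-1}_{\mathcal{O}}=\bm{\prec}^{-1}_{\mathcal{A}}=\emptyset$, which is trivially acyclic. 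Throughout we take the random tie-breaking in $\rho_o$ and $t_i^{\min}$ to induce strict total orders. Formally, we replace $\rho_o<\rho_{o'}$ by a lexicographic comparison with a random key, so that Eqs.~\eqref{eq:precO_pending} and~\eqref{eq:precA_pending} describe strict total orders.

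\textbf{Order level.} Eq.~\eqref{eq:precO_activefirst} only adds edges from $\mathcal{O}_{\mathrm{act}}^r$ to $\mathcal{O}_{\mathrm{pend}}^r$, and never in the reverse direction. Eq.~\eqref{eq:precO_active} only adds edges inside $\mathcal{O}_{\mathrm{act}}^r$, and Eq.~\eqref{eq:precO_pending} only adds edges inside $\mathcal{O}_{\mathrm{pend}}^r$. Suppose a directed cycle existed. It cannot cross from the pending part back to the active part, so it lies entirely within one part.
\begin{itemize}
\item Inside $\mathcal{O}_{\mathrm{pend}}^r$, the edges follow a strict total order by $\rho$ with ties broken, so there is no cycle.
\item Inside $\mathcal{O}_{\mathrm{act}}^r$, the edges are the restriction of $\bm{\prec}_{\mathcal{O}}^{r-1}$ to $\mathcal{O}_{\mathrm{act}}^r$. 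A restriction of an acyclic relation is acyclic, so there is no cycle here either.
\end{itemize}
One subtlety must be checked: every active order at round $r$ must belong to $\mathcal{O}^{r-1}$, so that inheritance is meaningful. An active agent $i$ is on the network at $\tau_r$, so it was planned at round $r-1$, which gives $i\in\mathcal{A}^{r-1}$. Even without this fact, the restriction argument still goes through.

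\textbf{Agent level.} Define the map $i\mapsto o_i$. Edges from Eq.~\eqref{eq:precA_inherit} strictly increase the order in the total order just obtained. Edges from Eqs.~\eqref{eq:precA_active} and~\eqref{eq:precA_pending} stay within a single order. Suppose a cycle contained an inter-order edge. Projecting the cycle onto orders gives a closed walk in $\bm{\prec}_{\mathcal{O}}^r$ with at least one strict step, which contradicts acyclicity at the order level. So any cycle lies within a single order $o$.
\begin{itemize}
\item If $o$ is pending, the edges inside $o$ form the strict total order by $t_i^{\min}$ with ties broken, which is acyclic.
\item If $o$ is active, the edges inside $o$ are a restriction of $\bm{\prec}_{\mathcal{A}}^{r-1}$, which is acyclic by the induction hypothesis.
\end{itemize}
Hence $\bm{\prec}_{\mathcal{A}}^r$ is acyclic, and its transitive closure is a strict partial order. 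This closes the induction.

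\textbf{Main obstacle.} The delicate step is the induction hypothesis on $\bm{\prec}_{\mathcal{A}}^{r-1}$. Later, refinement replaces $\bm{\prec}^r$ by neighbour constraints $\tilde{\bm{\prec}}^r$ (line~\ref{algo:dopp:update-solution} of \cref{alg:dopp}), and the sets inherited by round $r+1$ are these updated ones. So the statement must be proved for whatever sets are stored at the end of each round. I would handle this by strengthening the invariant to ``the stored constraint sets are acyclic''. I would then note that Level~3 and Level~2 refinements only permute pending orders, or agents within a pending order, which replaces one strict total order by another and keeps the sets acyclic. If the refinement description turns out to be vaguer than that, the fallback is to state the lemma for the initialization sets, assuming the inherited sets are acyclic.
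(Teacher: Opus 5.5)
Your proof is correct, but it establishes a different property from the one the paper's proof targets.

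\textbf{The paper's route.} The paper proves \emph{pairwise comparability}. It uses the same three-way split you use (active--pending, inherited active--active, and pending--pending by $\rho_o$ or $t_i^{\min}$) to show that any two distinct orders, or agents, are related in $\bm{\prec}_{\mathcal{O}}^r$, resp.\ $\bm{\prec}_{\mathcal{A}}^r$. At the agent level it invokes single-window reveal, Eq.~\eqref{eq:single_window}, to guarantee that all agents of an active order were already ranked in round $r-1$. It then concludes that topological sorting succeeds, tacitly treating the union of constraints as consistent.

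\textbf{Your route.} You prove \emph{acyclicity} instead. At the order level you use the block structure: there are no pending-to-active edges. At the agent level you project cycles onto closed walks in $\bm{\prec}_{\mathcal{O}}^r$. This buys three things:
\begin{itemize}
\item It supplies exactly the hypothesis topological sorting needs, which the paper leaves implicit.
\item It does not need single-window reveal: a late-revealed agent of an active order is merely unconstrained relative to its order-mates, which creates no cycle.
\item It survives the refinement step, as you note, because chains are acyclic and restrictions of acyclic relations are acyclic. Literal pairwise membership is lost there, since the refined pending part contains only consecutive pairs $o_l\prec o_{l+1}$.
\end{itemize}

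\textbf{What comparability buys.} The paper's route gives totality of the constraint set itself. This is what Lemma~\ref{lem:blocking_oc} actually uses (``either $o\prec o'$ or $o'\prec o$''): destination blocking, defined via membership in $\bm{\prec}_{\mathcal{O}}^r$, must separate every pair of orders sharing a destination. Acyclicity alone would allow two incomparable orders to interleave at the same destination.

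If you want your lemma to support that downstream step, add the comparability half. For the initialization-only sets it follows by induction along your own case split. It uses $\mathcal{O}_{\mathrm{act}}^r\subseteq\mathcal{O}^{r-1}$, which you already verified, together with single-window reveal for agents of active orders.
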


\begin{lemma}\label{lem:blocking_oc}
At every round $r$, the snapshot solution produced by Level~1 is order-contiguous.
\end{lemma}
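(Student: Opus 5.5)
We argue destination by destination and rely on \cref{lem:precO_total}. Fix a round $r$ and a destination $g\in\mathcal{G}$. By \cref{lem:precO_total}, $\bm{\prec}_{\mathcal{O}}^r$ is a strict total order on $\mathcal{O}^r$. The list $\mathbf{L}_{\mathcal{A}}^r$ comes from topologically sorting $\bm{\prec}_{\mathcal{A}}^r$, and $\bm{\prec}_{\mathcal{A}}^r$ contains Eq.~\eqref{eq:precA_inherit}. Hence every agent of a higher-precedence order is planned before every agent of a lower-precedence order.

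The key claim is this: for any agents $j,i$ with $g_j=g_i=g$ and $(o_j\prec o_i)\in\bm{\prec}_{\mathcal{O}}^r$, we have $T_j^r<T_i^r$. Since $j$ is planned before $i$, the value $T_j^r$ is already fixed when $\Gamma_i$ is computed, so $\Gamma_i\ge T_j^r+k+1$. Destination blocking forbids $i$ from occupying $g$ before $\Gamma_i$, so $T_i^r\ge\Gamma_i>T_j^r$.

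We then derive contiguity. Take agents $i_p,i_q,i_r$ at $g$ with $T_{i_p}<T_{i_q}<T_{i_r}$ and $o_{i_p}=o_{i_r}=o$, and suppose $o_{i_q}=o'\neq o$. Because the order is total, either $o\prec o'$ or $o'\prec o$. If $o\prec o'$, the claim applied to the pair $(i_r,i_q)$ gives $T_{i_r}<T_{i_q}$, a contradiction. If $o'\prec o$, the claim applied to $(i_q,i_p)$ gives $T_{i_q}<T_{i_p}$, again a contradiction. So $o_{i_q}=o$, and order-contiguity holds at every $g$.

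The main obstacle is that arrival times are compared across rounds. Agents that are active, or already finished, at $\tau_r$ have their prefixes fixed by $\pi^{r-1}$, and Level~1 does not freely choose their arrival times. There are three cases to handle.

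First, agents that arrived before $\tau_r$ carry $T_j^r=T_j^{r-1}$.

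Second, among active orders, $\bm{\prec}_{\mathcal{O}}^r$ inherits $\bm{\prec}_{\mathcal{O}}^{r-1}$ (Eq.~\eqref{eq:precO_active}). So the key claim for them was already enforced by blocking in round $r-1$. One argues by induction on $r$ that the committed prefixes remain consistent, using the committed-suffix policy, which replans active agents with inherited priorities and the same blocking.

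Third, by single-window reveal (Eq.~\eqref{eq:single_window}), no order has agents that finished in an earlier round alongside agents that are pending now. Therefore finished orders form blocks that never interleave with newly planned orders; a pending order's agents all enter at or after $\tau_r$, while finished orders' arrivals precede it. One should also check that finished agents remain in the comparison set, or handle them separately via this single-window argument.

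I expect this inductive bookkeeping across rounds to be the delicate part. The within-round argument above is straightforward.
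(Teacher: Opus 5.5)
Your within-round argument (totality of $\bm{\prec}_{\mathcal{O}}^r$ from \cref{lem:precO_total}, destination blocking giving $T_j^r<T_i^r$ whenever $g_j=g_i$ and $o_j\prec o_i$, then a contradiction on the offending triple) is essentially the paper's proof. The paper stops there and never separates agents whose arrivals were fixed in earlier rounds, so your cross-round discussion is your own addition. You are right that the difficulty lies there, but your third case is false.

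Single-window reveal, Eq.~\eqref{eq:single_window}, constrains $t_i^{\min}$, not actual entry times. Consider an order $o$ with destination $g$. It can have an agent $i_p$ that arrived, was served, and left before $\tau_r$. It can also have an agent $i_r$ with $t_{i_r}^{\min}<\tau_r$ that $\pi^{r-1}$ schedules to enter only after $\tau_r$, e.g., because higher-priority agents occupy its entry vertex. Since no agent of $o$ is on the conveyor at $\tau_r$, $o$ is classified as pending.

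Now take an order $o'$ with $d(o')=g$ that followed $o$ in round $r-1$. It may already have an agent $i_q$ queued on the conveyor; $i_q$ cannot have reached $g$, being blocked behind $i_r$. Then $o'$ is active, and Eq.~\eqref{eq:precO_activefirst} reverses the precedence to $o'\prec o$. Level~1 computes $\Gamma_{i_q}$ without reference to $o$, lets $i_q$ into $g$, and blocks $i_r$ behind it. The arrivals at $g$ then read $o,o',o$.

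In your contradiction, this is the case $o'\prec o$, where you apply the key claim to the pair $(i_q,i_p)$. That step needs $T_{i_p}^r\ge\Gamma_{i_p}$ with $\Gamma_{i_p}$ computed under $\bm{\prec}_{\mathcal{O}}^r$. But $i_p$ is never replanned in round $r$: its arrival was fixed under the opposite precedence of round $r-1$. So the induction you anticipate cannot be closed through single-window reveal. The paper's proof, which applies blocking uniformly to every agent of $\mathcal{A}^r$, has the same hole. Repairing it requires changing how such partially delivered orders are classified, so that their precedence relative to other committed orders is inherited rather than reset. More bookkeeping on top of the existing construction will not suffice.
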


\begin{theorem}\label{thm:feasible_poly}
At each replanning time $\tau_r$, DOPP constructs a feasible snapshot solution in polynomial time.
\end{theorem}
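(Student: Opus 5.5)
By \cref{lem:precO_total}, the initialization yields a total order $\mathbf{L}_{\mathcal{A}}^r$ over $\mathcal{A}^r$ that respects $\bm{\prec}_{\mathcal{O}}^r$. By \cref{lem:blocking_oc}, whatever Level~1 returns is order-contiguous. Since agents are planned against fixed earlier paths, collision-freeness is immediate. What remains is to show that, under $\mathbf{L}_{\mathcal{A}}^r$ and destination blocking, space-time A* \emph{always} finds a valid path for each agent. It must also run in polynomial time, and the prefix condition (i) must hold. I would prove these by induction over rounds $r$ and, within a round, over positions in $\mathbf{L}_{\mathcal{A}}^r$.

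\textbf{Step 1: existence of a path for each agent (well-formedness).} Fix agent $i$ and the set $P$ of paths planned before it. Every planned path is finite: it leaves the network at time $T_j+k+1$. So there is a horizon $\bar t$ after which all of $V$ is free. If $i$ is pending, it can wait at $\bot$ until $\max(\bar t,\Gamma_i,t_i^{\min})$. It then enters at $s_i$ and follows a shortest $s_i$--$g_i$ path, which exists by reachability, with no collision. Destination blocking is satisfied because $\Gamma_i\le \bar t$: the maximum in $\Gamma_i$ is taken over $T_j^r+k+1$ of already-planned agents $j$, since $o_j\prec o_i$ forces $j$ before $i$ in $\mathbf{L}_{\mathcal{A}}^r$. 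If $i$ is active, its prefix up to $\tau_r$ is fixed. I would argue that its inherited suffix from $\pi^{r-1}$ remains collision-free and respects blocking w.r.t.\ the agents planned before it. This holds because (a) active agents precede all pending agents (\cref{eq:precO_activefirst} with \cref{eq:precA_inherit}, and the fact that an active order contains only agents that are active or were already planned), and (b) their relative order and order precedence are inherited (\cref{eq:precO_active,eq:precA_active}). Hence the set of agents planned before $i$ is a subset of those planned before $i$ in round $r-1$, all of which reuse their previous paths. Here I would use the induction hypothesis that $\pi^{r-1}$ was feasible. Newly visible agents all have $t^{\min}\ge\tau_r$ and belong to pending orders by single-window reveal (\cref{eq:single_window}), so they come later in the ordering.

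\textbf{Step 2: consistency across rounds.} From Step 1, A* can return a path that coincides with $\pi_i^{r-1}$ on $t\le\tau_r$, since the search starts from the fixed state at $\tau_r$. This gives condition (i).

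\textbf{Step 3: polynomial time.} Topological sorting is $O(|\mathcal{A}^r|^2)$. For space-time A*, I would bound the time horizon by $\bar t\le \max_i t_i^{\min} + \tau_r + |\mathcal{A}^r|\,(|V|+k+1)$, by showing inductively that each agent's completion exceeds the previous maximum by at most $|V|+k+1$. Each search then explores $O(|V|\cdot\bar t)$ states, which is polynomial in the input size, treating time values as given numbers. Summing over $|\mathcal{A}^r|$ agents gives a polynomial total.

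\textbf{Main obstacle.} The delicate point is the active-agent case in Step 1. I need the inherited suffix to remain compatible with destination blocking, i.e., that $\Gamma_i$ under $\bm{\prec}_{\mathcal{O}}^r$ does not exceed its round-$(r-1)$ value. An agent of an active order that had not yet entered at $\tau_r$ is also problematic. I would handle this by noting that such agents are pending but belong to active orders, and they precede pending orders. They can therefore still wait at $\bot$ and apply the Step 1 pending argument restricted to the earlier, already-planned agents.
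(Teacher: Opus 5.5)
\textbf{The gap is in the active case.} Your outline matches the paper's proof: induction over rounds, inherited priorities plus the induction hypothesis for agents already on the conveyor, waiting at $\bot$ for the rest, \cref{lem:blocking_oc} for order-contiguity, and a horizon bound for polynomiality. The step that fails is the sentence carrying the active case: ``\dots a subset of those planned before $i$ in round $r-1$, all of which reuse their previous paths.''

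Level~1 does not reuse paths. It reruns space-time A* for every agent, and in round $r$ a predecessor's constraint set can be strictly weaker than in round $r-1$. An order that preceded it in round $r-1$ but has no agent on the conveyor at $\tau_r$ is moved behind all active orders by \cref{eq:precO_activefirst}. Its agents then no longer constrain the predecessor's search or its $\Gamma$, so A* may return an earlier or merely different continuation.

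Your own handling of not-yet-entered agents of active orders exposes this. You replan them by the pending argument, yet by \cref{eq:precA_active} they can precede active agents of the same order. Those active agents' old suffixes were certified only against the \emph{old} paths. A pending agent can absorb such a change by staying at $\bot$; an active one cannot. For example, a predecessor's new route may pass through the cell $i$ occupies on a single-lane segment. Then $i$ is pushed forward to a fork whose branches are its own destination (still blocked until $\Gamma_i$) and a branch that is occupied by a higher-priority path or ends in another destination. No valid continuation exists, and PP fails.

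\textbf{The paper has the same leap, and there is a repair.} The paper's proof makes this step in one line (``a feasible continuation exists at $\tau_r$ \dots\ Therefore, PP succeeds''). So you are not missing an idea the paper supplies, but the step does not follow as stated. You need an invariant the algorithm actually maintains. For instance, read the committed-suffix policy as freezing $\pi^{r-1}$ for every agent of an active order, entered or not, and run PP only on pending orders. Your subset-of-predecessors argument then goes through verbatim. A sub-collection of the collision-free $\pi^{r-1}$ stays collision-free, and $\Gamma_i$ can only decrease, so the frozen paths still respect destination blocking.

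The same issue affects Step~3. The claim that each agent's completion exceeds the previous maximum by at most $|V|+k+1$ holds only for agents that can wait at $\bot$. For agents of active orders, the horizon must start from their inherited completion times $T_i^{r-1}+k+1$. This is the role of the paper's $C_{\mathrm{act}}^r$ term.
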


\subsection{Refinement}\label{sec:dopp_refine}
The solution obtained by initialization is feasible but is based on heuristically chosen priorities, leaving room for makespan improvement.
In the batch-based replanning assumed in \cref{sec:online_mapfoc}, planning for the next batch can proceed while the current batch is being executed, providing additional computation time.
DOPP therefore performs an \emph{anytime refinement} (lines~\ref{algo:dopp:main-loop}--\ref{algo:dopp:update-solution} in \cref{alg:dopp}) that searches better priority constraints at Level~3 and/or Level~2, while preserving well-formedness.
Implementation details are provided in Appendix.

\paragraph{Order-precedence refinement.}
Level~3 refines $\bm{\prec}_{\mathcal{O}}^r$ by modifying the constraints in
\cref{eq:precO_pending},
while keeping the constraints in \cref{eq:precO_activefirst,eq:precO_active} unchanged; this preserves well-formedness by retaining precedence among active orders.
We perform \emph{neighborhood search (NS)} over pending orders by selecting a contiguous window of $W_{\mathrm{L3}}$ orders and sampling $K_{\mathrm{L3}}$ candidate perturbations within it (\cref{fig:algorithm}, bottom).
For each candidate, we rebuild $\tilde{\bm{\prec}}_{\mathcal{O}}^r$ accordingly, re-initialize the corresponding agent-priority constraints, and evaluate it by running Level~1.
These candidate evaluations are independent and can be parallelized.

{
\newcommand{\figheight}{0.1\linewidth}
\newcommand{\figwidth}{0.157\linewidth}
\newcommand{\colwidth}{0.159\linewidth}
\newcommand{\xlabel}{-1.1 * \colwidth}
\newcommand{\xscen}{-1.25 * \colwidth}
\newcommand{\entry}[9]{%
  \node[anchor=north] at (\colwidth * #1 - 0.49 * \colwidth, \figheight * 1.2) {\scriptsize\emph{#3}};
  \node[anchor=north] at (\colwidth * #1 - 0.49 * \colwidth, \figheight * 1.05)
  {\tiny $|V|$$=$$#4$, $\lambda$$=$$#5$, $|\mathcal{A}^r|$$=$$#6$};
  %
  \node[anchor=north east] at (\colwidth * #1, \figheight * 0.9)
  {\includegraphics[width=0.13\linewidth]{result_one-shot/map/#7}};
  %
  \node[anchor=north east] at (\colwidth * #1, -\figheight * 0)
  {\includegraphics[width=\figwidth]{result_one-shot/gap/#8}};
  \node[anchor=north east] at (\colwidth * #1, -\figheight * 1)
  {\includegraphics[width=\figwidth]{result_one-shot/improvement/#8}};
  \node[anchor=north east] at (\colwidth * #1, -\figheight * 2)
  {\includegraphics[width=\figwidth]{result_one-shot/utilization/#8}};
  %
  \node[anchor=north east] at (\colwidth * #1, -\figheight * 3.2)
  {\includegraphics[width=\figwidth]{result_lifelong/gap/#9}};
  \node[anchor=north east] at (\colwidth * #1, -\figheight * 4.2)
  {\includegraphics[width=\figwidth]{result_lifelong/utilization/#9}};
}
    \begin{figure*}[th!]
    \centering
    \begin{tikzpicture}
    \entry{0}{0}{Small-A}{$69$}{1}{$300$}{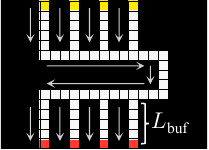}{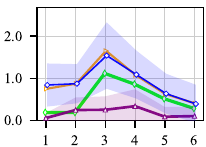}{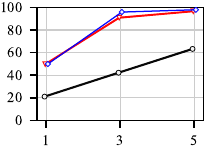} 
    \entry{1}{0}{Small-B}{$72$}{1}{$300$}{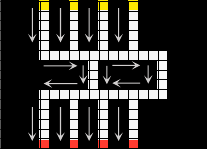}{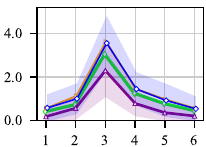}{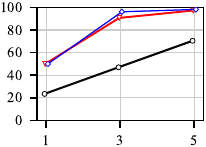}
    \entry{2}{0}{Medium-A}{$159$}{1}{$300$}{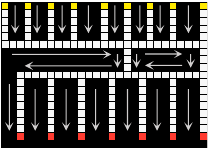}{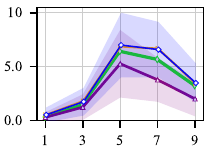}{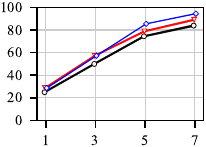}
    \entry{3}{0}{Medium-B}{$162$}{1}{$300$}{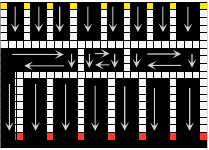}{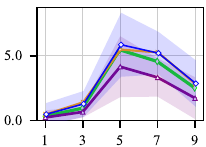}{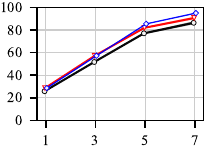}
    \entry{4}{0}{Complex}{$229$}{3}{$500$}{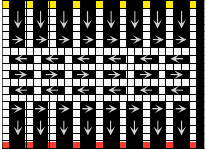}{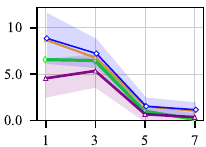}{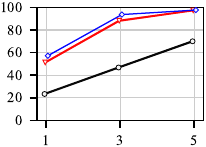}
    \entry{5}{0}{Large}{$360$}{3}{$1000$}{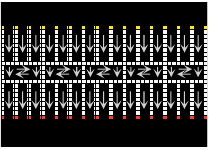}{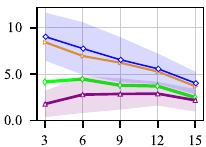}{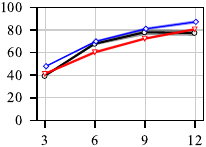}
    %
    \node[rotate=90,anchor=west] at (\xscen, -2.0 * \figheight) {One-shot};
    \node[rotate=90,anchor=west] at (\xscen, -4.6 * \figheight) {Lifelong};
    \node[rotate=90,anchor=west] at (\xlabel, -1.1 * \figheight) {\scriptsize $\leftarrow$Makespan/LB};
    \node[rotate=90,anchor=west] at (\xlabel, -2.0 * \figheight) {\scriptsize Impr.(\%)$\rightarrow$};
    \node[rotate=90,anchor=west] at (\xlabel, -3.1 * \figheight) {\scriptsize Avg. Util.(\%)$\rightarrow$};
    \node[rotate=90,anchor=west] at (\xlabel, -4.3 * \figheight) {\scriptsize $\leftarrow$Makespan/LB};
    \node[rotate=90,anchor=west] at (\xlabel, -5.3 * \figheight) {\scriptsize Avg. Util.(\%)$\rightarrow$};
    \node[] at (\colwidth * -0.45, -5.6 * \figheight) {\small Service time $k$};
    \node[anchor=east] at (\colwidth * 5, -5.62 * \figheight)
    {\includegraphics[width=0.78\linewidth]{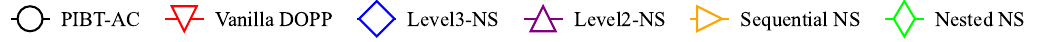}};
    \end{tikzpicture}
    \caption{Results for one-shot and lifelong experiments. 
    Headers give the map name, graph size $|V|$, arrival rate $\lambda$, and the number of visible agents per window $|\mathcal{A}^r|$, along with the layout (yellow: entries $\mathcal{S}$; red: destinations $\mathcal{G}$).
    \emph{Upper} rows shows the results for the one-shot problem: makespan $M/M_{\mathrm{LB}}$, relative makespan improvement over \emph{Vanilla DOPP} by anytime refinement, $\mbox{Impr.\,(\%)}:=100\cdot (M_{\mathrm{Vanilla}}-M)/M_{\mathrm{Vanilla}}$, and the utilization $\bar U$. 
    \emph{Lower} rows show those for the lifelong problem: makespan and utilization over 10 consecutive online windows.
    Each curve shows the mean over 20 instances.
    Shaded areas indicate the standard deviation for representative methods.
    }
    \label{fig:result_main}
  \end{figure*}
}

\paragraph{Agent-priority refinement.}
Level~2 refines $\bm{\prec}_{\mathcal{A}}^r$ by modifying the constraints in \cref{eq:precA_pending},
while keeping the constraints in \cref{eq:precA_inherit,eq:precA_active} unchanged; this preserves well-formedness by retaining priorities among active agents.
We perform NS by selecting a pending order $o\in\mathcal{O}_{\mathrm{pend}}^r$ and sampling $K_{\mathrm{L2}}$ alternative orderings of its agents.
For each candidate, we rebuild the constraints $\tilde{\bm{\prec}}_{\mathcal{A}}^r$ accordingly and evaluate it by rerunning Level~1.

\paragraph{Combination strategy.}
Refinement can be applied at Level~3 only, at Level~2 only, or by combining both.
For example, we may \emph{(i)} refine $\bm{\prec}_{\mathcal{O}}^r$ and, for each candidate, rebuild $\bm{\prec}_{\mathcal{A}}^r$ using the initialization rule and evaluate it by rerunning Level~1;
and/or \emph{(ii)} additionally refine $\bm{\prec}_{\mathcal{A}}^r$ for each Level~3 candidate, generating multiple candidates before rerunning Level~1.
We compare these strategies in \cref{sec:evaluation}.

\section{Evaluation}\label{sec:evaluation}
This section evaluates the effectiveness of \emph{DOPP} in terms of snapshot quality, runtime, scalability, anytime behavior, and lifelong performance.
All solvers are implemented in C++ and run on a MacBook Pro (Apple M3 Max, \SI{128}{\giga\byte} RAM).

\subsection{Setup}\label{sec:setup}
\paragraph{Maps.}
We use directed conveyor maps shown in \cref{fig:result_main} (top row), designed to cover both typical warehouse layouts and stress-test settings.
\texttt{Small-A/B} are simple maps that capture basic topology, and \texttt{Medium-A/B} are practical-scale maps derived from existing conveyor systems.
To stress-test beyond typical layouts, we also include \texttt{Complex} and \texttt{Large} maps to evaluate planners in more complex and large-scale environments.
Herein, we denote the length of exit-buffer as $L_{\mathrm{buf}}$ (e.g., $8$ for \texttt{Medium-A/B}).

\paragraph{Parcel generation.}
At each time $t=0,1,\ldots$, we generate $\lambda$ new agents with the earliest entry time $t_i^{\min}=t$.
To maintain a comparable workload density across map scales, $\lambda$ is set to $1$ for \texttt{Small}/\texttt{Medium}\texttt{-A/B}, and $3$ for \texttt{Complex} and \texttt{Large}.
For each planning time $\tau_r$, we set the number of agents visible per window $|\mathcal{A}^r|$ to $300$ for \texttt{Small}/\texttt{Medium}\texttt{-A/B},  $500$ for \texttt{Complex}, and $1{,}000$ for \texttt{Large}.
In the lifelong scenario, the next planning time $\tau_{r+1}$ is set to $\max_{i\in\mathcal{A}^r} t_i^{\min}$.
Agents are generated in order blocks: we sample an order size $l_o$ uniformly at random from $\{1,\ldots,L_{\mathrm{buf}}\}$, generate $l_o$ agents for the order, and then switch to the next order.
Each agent chooses its entry $s_i$ at random from the map-specific set $\mathcal{S}$.
Destinations are assigned per order to maintain a balanced workload across $\mathcal{G}$ by selecting the destination with the current minimum cumulative count and updating the count after assignment.
These settings are intended to mimic typical fulfillment operations.

\paragraph{DOPP variants.}
For snapshot experiments, we evaluate five variants of \emph{DOPP}:
\emph{Vanilla} (no refinement, i.e., initial solution),
\emph{Level3-NS} (refine only Level~3; rebuild Level~2 as in initialization for each Level~3 candidate),
\emph{Level2-NS} (refine only Level~2; keep Level~3 fixed),
\emph{Sequential~NS} (first refine Level~3, then refine Level~2 for the best Level~3 candidate), and
\emph{Nested~NS} (for each Level~3 candidate, refine Level~2 to generate multiple candidates).
For neighborhood search, we use $W_{\mathrm{L3}}=7$ and $K_{\mathrm{L3}}=K_{\mathrm{L2}}=10$.
These hyperparameters were set via a pilot study; see Appendix.
We use a \SI{60}{\second} refinement budget per snapshot instance:
\emph{Level2/3-NS} run for \SI{60}{\second}, \emph{Sequential~NS} runs Level~3 for \SI{40}{\second} then Level~2 for \SI{20}{\second}, and \emph{Nested~NS} uses \SI{60}{\second} total for Level~3 with a \SI{1}{\second} cap for each inner Level~2 call. 
Level~3 refinement is parallelized with up to 16 threads.

\paragraph{Baseline.}
We also evaluate the baseline \emph{PIBT-AC}, which runs PIBT~\cite{okumura22priority}---a representative MAPF algorithm that prevents collisions locally---and enforces order-contiguity via \emph{admission control (AC)}.
For each destination $g$, while an order to $g$ is active, agents of other orders destined for $g$ are not admitted into the network, and the lock is released when the remaining agents of the active order are within $L_{\mathrm{buf}}$ steps of $g$.
This provides an intuitive and scalable rule-based solution to MAPF-OC.
Note that adapting other MAPF solvers, such as CBS variants, is non-trivial due to the order-contiguity constraint. 
Furthermore, these methods are not well suited for real-time planning on large-scale problems, which led to the current experimental design.

\paragraph{Metrics.}
We use the \emph{normalized makespan} $M/M_{\mathrm{LB}}$, where $M=\max_i (T_i+k)$, $M_{\mathrm{LB}}=\max_i\bigl(t_i^{\min}+\mathrm{dist}(s_i,g_i)+k\bigr)$, and $\mathrm{dist}$ is the shortest travel time on $G$; hence $M_{\mathrm{LB}}$ serves as the makespan lower bound.
We also report \emph{average destination utilization} $\bar U$~(\%) to diagnose whether performance is limited by destination capacity.
When $\bar U$ is close to 100\%, improving makespan requires increasing destination capacity (e.g., adding workstations) rather than better routing.
Concretely, for each destination $g$, let $x_g(t)=1$ iff $\exists i:\pi_i(t)=g$ (otherwise $0$), and denote
$t_a(g)=\min\{t\mid x_g(t)=1\}$ and $t_b(g)=\max\{t\mid x_g(t)=1\}$.
The utilization is $U(g):=100\cdot\bigl(\sum_{t=t_a(g)}^{t_b(g)} x_g(t)\bigr)/(t_b(g)-t_a(g)+1)$, and $\bar U$ is the average over destinations.

\subsection{Results}\label{sec:results}
We first solve the snapshot offline problem (\cref{sec:oneshot_mapfoc}) to analyze the behavior of \emph{DOPP variants}, and then evaluate the representative ones on the lifelong setting (\cref{sec:online_mapfoc}).

\paragraph{Snapshot quality.}
\Cref{fig:result_main} (upper) summarizes makespan, improvement over \emph{Vanilla DOPP}, and destination utilization across six maps while sweeping the service time $k$.
Across all maps and $k$, \emph{DOPP} consistently outperforms \emph{PIBT-AC}, and even \emph{Vanilla} achieves strong makespans.
The advantage over \emph{PIBT-AC} is striking on \texttt{Small-A/B}, where admission control induces long waits relative to the map size, whereas \emph{DOPP} remains close to the lower bound for $k \le 3$.
As $k$ increases, the utilization approaches saturation, and the makespan degrades for all methods, suggesting that performance becomes dominated by workstation capacity rather than routing decisions.
The benefit of refinement is most visible at intermediate $k$ values where both routing conflicts and destination bottlenecks matter; in these cases, \emph{Level3-NS} yields the best results, followed by \emph{Sequential NS}.
This highlights that order precedence (Level~3) is a primary driver of performance.
The same trend appears on \texttt{Medium-A/B}, where the gap among \emph{DOPP} variants is particularly pronounced, and it persists on the \texttt{Complex} and \texttt{Large} maps, demonstrating robustness to complex layouts and larger instances.

{
\begin{figure}[t]
  \centering
  \begin{tikzpicture}
    \node[anchor=south] at ($ (2.0,0) + (0, {-5pt}) $) {\scriptsize\emph{Small-A}};
    \node[anchor=south] at ($ (4.7,0) + (0, {-5pt}) $) {\scriptsize\emph{Medium-A}};
    \node[anchor=south] at ($ (7.4,0) + (0, {-5pt}) $) {\scriptsize\emph{Large}};
    \node[anchor=north] at ($ (0,0) + ({0.50\columnwidth}, {0pt}) $) {%
      \includegraphics[width=\columnwidth]{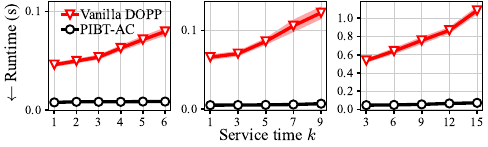}%
    };
  \end{tikzpicture}
  \caption{Runtime at representative snapshot settings in \cref{fig:result_main}. Each point shows the mean over 20 instances.
  }
  \label{fig:oneshot_runtime}
\end{figure}
}

{
\begin{figure}[t]
  \centering
  \includegraphics[width=\linewidth]{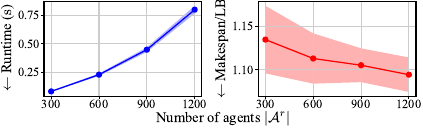}
  \caption{DOPP's scalability on \texttt{Medium-A}. 
  Runtime and makespan against the number of visible agents $|\mathcal{A}^r|$ are shown.
  Service time $k$ is set to five. 
  Each point shows the mean over 20 instances.
  }
  \label{fig:scalability}
\end{figure}
}
{
    \setlength{\tabcolsep}{0pt}
    \newcommand{\entry}[4]{%
    \begin{minipage}{0.49\linewidth}
        \centering
        {\scriptsize\hspace{10mm}\textit{#1}\quad(#3)}
        {\includegraphics[width=1.0\linewidth]{result_anytime/#4_improvement_vs_k.pdf}}
        {\includegraphics[width=1.0\linewidth]{result_anytime/#4_l3-only_anytime.pdf}}
    \end{minipage}
    }%
    \begin{figure}[t]
    \centering
    \begin{tabular}{cc}
      \entry{Medium-A}{$|V|=159$}{$|\mathcal{A}^r|{=}300$}{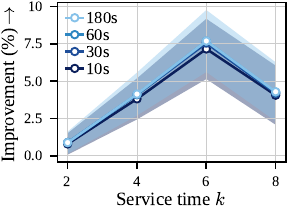} &
      \entry{Large}{$|V|=360$}{$|\mathcal{A}^r|{=}1,000$}{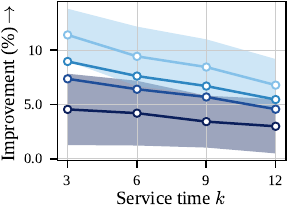}\\ [-0pt]
    \end{tabular}
    \caption{
    Anytime refinement on \texttt{Medium-A} (left) and \texttt{Large} (right), over 20 instances.
    \emph{Top}: relative makespan improvement of \emph{Level3-NS} over \emph{Vanilla DOPP} after a planning budget $10/30/60/180$ \si{\second}.
    \emph{Bottom}: best-so-far makespan versus runtime under the \SI{180}{\second} budget across the various service time $k$.
    }
    \label{fig:result_anytime}
    \end{figure}
}

\paragraph{Runtime and scalability.}
\Cref{fig:oneshot_runtime} reports runtime at representative snapshot cases sampled from the one-shot results in \cref{fig:result_main}.
\emph{PIBT-AC} is extremely lightweight and runs in $\SI{50}{\milli\second}$ even on \texttt{Large} ($|\mathcal{A}^r|=1{,}000$) with $k{=}3$, while \emph{Vanilla DOPP} remains practical (${\approx}\SI{0.53}{\second}$) despite achieving much better solution quality.
Runtime increases mildly as $k$ grows due to longer makespans and deeper space-time searches, but remains below \SI{0.1}{\second} on \texttt{Small/Medium-A} and below \SI{1}{\second} even on \texttt{Large}.
\Cref{fig:scalability} stresses \emph{Vanilla} on \texttt{Medium-A} by increasing the number of visible agents.
Runtime grows with $|\mathcal{A}^r|$ but stays below \SI{1}{\second} even with $|\mathcal{A}^r|=1{,}200$.
This scalability performance covers practically relevant horizons for conveyor control, since far-future parcels are less certain in online operation.
The normalized makespan improves with larger $|\mathcal{A}^r|$, because the denominator $M_{\mathrm{LB}}$ increases with later release times while $M$ stays close to $M_{\mathrm{LB}}$.

\paragraph{Anytime behavior.}
We analyze how solution quality improves with additional computation time.
In \cref{fig:result_anytime}, the top row quantifies the benefit of refinement by plotting the relative makespan improvement of \emph{Level3-NS} over \emph{Vanilla DOPP} under different time budgets.
\texttt{Medium-A} largely saturates by \SI{10}{\second} (${\approx} 7.5\%$ improvement at $k{=}6$), whereas \texttt{Large} continues to improve as time permits (e.g., exceeding $10\%$ improvement at $k{=}3$) due to its instance scale.
The bottom row plots best-so-far makespan versus time, showing that most gains are obtained early on \texttt{Medium-A}.
On \texttt{Large}, rapid initial improvements are followed by slower but persistent progress up to the time limit.

\paragraph{Lifelong performance.}
Finally, we evaluate the online setting by solving 10 consecutive windows\footnote{Pilot experiments with longer scenarios showed similar qualitative trends.}.
\Cref{fig:result_main} (lower) shows that the snapshot improvements translate to lifelong replanning---\emph{DOPP} continues to produce effective replans, resulting in consistently lower makespan than \emph{PIBT-AC}.
Refinement via \emph{Level3-NS} remains beneficial in the lifelong setting; for example, on \texttt{Medium-A} with $k{=}5$, \emph{Vanilla} finds a feasible plan in \SI{0.3}{\second} per window on average, and \emph{Level3-NS} removes about $92\%$ of the excess over the lower bound (from $1.093$ to $1.007$).
\emph{Vanilla} is slightly slower than in one-shot runs (\cref{fig:oneshot_runtime}) because agents carry over across windows, but the overhead is modest since PP dominates its runtime and scales linearly with the number of planned agents.
As an implementation note, replanning intervals $\tau_{r+1}-\tau_r$ span hundreds of timesteps across maps in our setup, so the \SI{60}{\second} refinement budget fits within each replanning cycle.
These results indicate that the gains observed in snapshot planning carry over to settings where decisions must be made repeatedly online, with the system state evolving from previous decisions.

\section{Conclusion}
We formulated online MAPF-OC on conveyor networks and presented \emph{DOPP}, an anytime algorithm that quickly finds feasible solutions and improves makespan as time permits.
Experiments demonstrated practical scalability and strong solution quality.
This work enriches an application example of foundational MAPF algorithms beyond typical warehouse sortation.
Future work includes extending online MAPF-OC to incorporate the parcel entry positions and destination assignment for more advanced logistics automation.

\bibliographystyle{named}
\bibliography{ref_macro,ijcai26}

{
\appendix
\section*{Appendix}
\section{Proofs}
We provide omitted proofs for lemmas and theorems in \cref{sec:theoretical_analysis}.

\paragraph{Lemma~\ref{lem:precO_total}.}
\textit{
For every round $r$, $\bm{\prec}_{\mathcal{O}}^r$ and $\bm{\prec}_{\mathcal{A}}^r$ admit a total ordering over $\mathcal{O}^r$ and $\mathcal{A}^r$, respectively, via topological sorting.
}

\begin{proof}
We prove the claim for $\bm{\prec}_{\mathcal{O}}^r$ and $\bm{\prec}_{\mathcal{A}}^r$ separately.

\noindent\emph{Order precedence $\bm{\prec}_{\mathcal{O}}^r$.}
We argue by induction on $r$.
At $r=0$, visible orders are totally ordered by the key $\rho_o$ with tie-breaking.
Assume $\bm{\prec}_{\mathcal{O}}^{r-1}$ induces a total ordering on the visible orders of round $r-1$.
In round $r$, DOPP places every active order before every pending order (Eq.~\eqref{eq:precO_activefirst}),
inherits the relative precedence among active orders from $\bm{\prec}_{\mathcal{O}}^{r-1}$ (Eq.~\eqref{eq:precO_active}),
and totally orders pending orders by sorting $\rho_o$ with tie-breaking (Eq.~\eqref{eq:precO_pending}).
Thus any two distinct orders in $\mathcal{O}^r$ are comparable, and $\bm{\prec}_{\mathcal{O}}^r$ admits a total ordering via topological sorting.

\noindent\emph{Agent priority $\bm{\prec}_{\mathcal{A}}^r$.}
We argue by induction on $r$.
We show that $\bm{\prec}_{\mathcal{A}}^r$ admits a total ordering over $\mathcal{A}^r$ by establishing that any two distinct agents $i,j\in\mathcal{A}^r$ are comparable.
If $o_i\neq o_j$, then $\bm{\prec}_{\mathcal{O}}^r$ totally orders $o_i$ and $o_j$, and Eq.~\eqref{eq:precA_inherit} transfers this order to $i$ and $j$; hence $i$ and $j$ are comparable in $\bm{\prec}_{\mathcal{A}}^r$.
Now consider the case $o_i=o_j=o$.
If $o\in\mathcal{O}_{\mathrm{act}}^r$, then by the \emph{single-window reveal} assumption (Eq.~\eqref{eq:single_window}), all agents of order $o$ were already visible in round $r-1$ and hence were included in $\bm{\prec}_{\mathcal{A}}^{r-1}$.
By the induction hypothesis, $\bm{\prec}_{\mathcal{A}}^{r-1}$ admits a total ordering over these agents, and Eq.~\eqref{eq:precA_active} preserves their pairwise precedence in round $r$; hence $i$ and $j$ are comparable.
If $o\in\mathcal{O}_{\mathrm{pend}}^r$, then all agents of $o$ are pending and are ordered by $t^{\min}$ with tie-breaking (Eq.~\eqref{eq:precA_pending}), hence $i$ and $j$ are comparable.
Hence every pair of distinct agents in $\mathcal{A}^r$ is comparable, and $\bm{\prec}_{\mathcal{A}}^r$ admits a total ordering via topological sorting.
\end{proof}

\paragraph{Lemma~\ref{lem:blocking_oc}.}
\textit{
At every round $r$, the snapshot solution produced by Level~1 is order-contiguous.
}
\begin{proof}
We show that destination blocking in Level~1, introduced in \cref{sec:dopp_init}, enforces order-contiguity.
Fix a destination $g\in\mathcal{G}$ and consider the agents destined to $g$ sorted by increasing arrival times $T_{i_1}^r < \cdots < T_{i_m}^r$.
Suppose for contradiction that the arrival sequence at $g$ is not order-contiguous.
Then there exist indices $1\le p<q<r\le m$ such that $o_{i_p}=o_{i_r}$ and $o_{i_p}\neq o_{i_q}$.

Let $o:=o_{i_r}(=o_{i_p})$ and $o':=o_{i_q}$.
By Lemma~\ref{lem:precO_total}, either $o\prec o'$ or $o'\prec o$ holds.
If $o\prec o'$, then for agent $i_q$ we have
\[
\Gamma_{i_q}\ \ge\ T_{i_r}^r + k + 1,
\]
because $i_r$ is an agent of the higher-precedence order $o$ assigned to the same destination $g$.
Hence $T_{i_q}^r \ge \Gamma_{i_q} > T_{i_r}^r$, contradicting $T_{i_q}^r < T_{i_r}^r$.
The case $o'\prec o$ symmetrically contradicts $T_{i_p}^r < T_{i_q}^r$.
Therefore no such triple exists and arrivals at $g$ are order-contiguous.
\end{proof}

\paragraph{Theorem~\ref{thm:feasible_poly}.}
\textit{
At each replanning time $\tau_r$, DOPP constructs a feasible snapshot solution in polynomial time.
}

\begin{proof}
We first establish \emph{completeness} (existence of a feasible snapshot solution returned by PP under the constructed priorities), and then bound the computational complexity.

\paragraph{Completeness.}
We prove by induction on $r$ that Level~1 always plans all agents in $\mathcal{A}^r$ without failure, producing a snapshot-feasible solution.
For the case $r=0$, Lemma~\ref{lem:precO_total} ensures that the initialized precedences
$\bm{\prec}_{\mathcal{O}}^0$ and $\bm{\prec}_{\mathcal{A}}^0$ admit total orderings over $\mathcal{O}^0$ and $\mathcal{A}^0$.
Thus PP under the induced total agent order produces a collision-free snapshot plan,
and order-contiguity follows from Lemma~\ref{lem:blocking_oc}.
Hence $\pi^{0}$ is feasible.

\emph{Active agents.}
Consider the active agents $\mathcal{A}_{\mathrm{act}}^r$.
By definition in Eq.~\eqref{eq:class_agents}, every agent in $\mathcal{A}_{\mathrm{act}}^r$ was already planned at $\tau_{r-1}$.
Level~1 plans all active agents before any pending agent, and preserves their relative priority from round $r-1$ (Eq.~\eqref{eq:precO_activefirst} and Eq.~\eqref{eq:precA_active}).
By the induction hypothesis, $\pi^{r-1}$ is snapshot-feasible, so a feasible continuation exists at $\tau_r$ for all active agents.
Therefore, PP succeeds for all agents in $\mathcal{A}_{\mathrm{act}}^r$.

\emph{Pending agents.}
After planning agents in $\mathcal{A}_{\mathrm{act}}^r$, let $t^\star$ be any time strictly after all already planned agents have disappeared from the network.
A finite $t^\star$ exists because every agent that reaches its destination leaves the network after $k$ steps.
When planning a pending agent $i\in\mathcal{A}_{\mathrm{pend}}^r$, choose an entry time
\[
t_i \ge \max\{t_i^{\min}, t^\star, \Gamma_i\}.
\]
Then follow any directed path from $s_i$ to $g_i$, which exists under the reachability assumption in \cref{sec:problem_definition}.
Since the network is empty after $t^\star$, this yields a valid collision-free path that also respects destination blocking.
Repeating this for all pending agents shows that PP succeeds for all agents in $\mathcal{A}_{\mathrm{pend}}^r$.

Finally, order-contiguity follows from Lemma~\ref{lem:blocking_oc}.
Thus the resulting snapshot solution is feasible.

\paragraph{Polynomial time.}
Levels~3 and~2 consist of sorting and topological sorting over $\mathcal{O}^r$ and $\mathcal{A}^r$, thus run in polynomial time.
For Level~1, we provide a finite horizon that always contains a feasible solution.
We define,
\[
C_{\mathrm{act}}^r :=
\begin{cases}
\max_{i\in\mathcal{A}_{\mathrm{act}}^r}\bigl(T_i^{r-1}+k+1\bigr) & \text{if }\mathcal{A}_{\mathrm{act}}^r\neq\emptyset,\\
\tau_r & \text{otherwise},
\end{cases}
\]
and
\[
t_{\max}^r :=
\begin{cases}
\max_{i\in\mathcal{A}_{\mathrm{pend}}^r} t_i^{\min} & \text{if }\mathcal{A}_{\mathrm{pend}}^r\neq\emptyset,\\
\tau_r & \text{otherwise}.
\end{cases}
\]
Let $D:=|V|-1$ and define
\[
\overline{T}^r := \max\{C_{\mathrm{act}}^r,\, t_{\max}^r,\, \tau_r\} + |\mathcal{A}_{\mathrm{pend}}^r|\,(D+k+1).
\]
When planning an agent, Level~1 searches on the time-expanded graph truncated at $\overline{T}^r$,
which has $O(|V|\cdot\overline{T}^r)$ vertices and $O(|E|\cdot\overline{T}^r)$ edges.
Space-time A* search on this graph runs in polynomial time in $|V|,|E|,$ and $\overline{T}^r$.
Repeating this for all agents in $\mathcal{A}^r$ yields an overall polynomial running time.

\end{proof}

\section{Implementation Details of Refinement}
We describe the implementation of the neighborhood searches in Level~3 and Level~2 introduced in \cref{sec:dopp_refine}.

\paragraph{Order-precedence refinement (Level~3).}
Let $\mathbf{L}_{\mathcal{O}}^r$ be a total ordering over $\mathcal{O}^r$ obtained by topologically sorting $\bm{\prec}_{\mathcal{O}}^r$, and let $\mathbf{L}_{\mathcal{O},\mathrm{pend}}^r$ be the subsequence of $\mathbf{L}_{\mathcal{O}}^r$ containing only pending orders in $\mathcal{O}_{\mathrm{pend}}^r$.
We select a contiguous window of $W_{\mathrm{L3}}$ orders in $\mathbf{L}_{\mathcal{O},\mathrm{pend}}^r$,
and sample $K_{\mathrm{L3}}$ perturbed sequences by randomly permuting its elements.
For each perturbed sequence $\tilde{\mathbf{L}}_{\mathcal{O},\mathrm{pend}}^r=(o_1,\ldots,o_m)$,
we construct candidate precedence constraints among pending orders as the chain
$\tilde{\bm{\prec}}_{\mathcal{O},\mathrm{pend}}^r:=\{o_l \prec o_{l+1}\mid l=1,\ldots,m-1\}$,
and form $\tilde{\bm{\prec}}_{\mathcal{O}}^r$ by replacing \cref{eq:precO_pending} with $\tilde{\bm{\prec}}_{\mathcal{O},\mathrm{pend}}^r$
while keeping \cref{eq:precO_activefirst,eq:precO_active} unchanged.
We then re-initialize the agent-priority constraints under $\tilde{\bm{\prec}}_{\mathcal{O}}^r$ and evaluate the candidate by running Level~1.

\paragraph{Agent-priority refinement (Level~2).}
Let $\mathbf{L}_{\mathcal{A}}^r$ be the total ordering over $\mathcal{A}^r$ obtained in initialization by topologically sorting $\bm{\prec}_{\mathcal{A}}^r$.
We select a pending order $o\in\mathcal{O}_{\mathrm{pend}}^r$ and extract the subsequence of its agents from $\mathbf{L}_{\mathcal{A}}^r$.
We then sample $K_{\mathrm{L2}}$ perturbed orderings of these agents, yielding $\tilde{\mathbf{L}}_{\mathcal{A},o}^r=(i_1,\ldots,i_q)$.
For each candidate, we encode it as a chain priority constraint among agents of $o$,
$\tilde{\bm{\prec}}_{\mathcal{A},o}^r:=\{i_l \prec i_{l+1}\mid l=1,\ldots,q-1\}$,
and build $\tilde{\bm{\prec}}_{\mathcal{A}}^r$ by replacing the corresponding part of \cref{eq:precA_pending}
while keeping \cref{eq:precA_inherit,eq:precA_active} unchanged.
Then, we evaluate each candidate by running Level~1.

\section{Pilot Study for Hyperparameter Selection}
\paragraph{Setup.}
We conducted a pilot study to select the hyperparameters for neighborhood search in Level~2 and Level~3.
We used the same protocol as the snapshot evaluations in \cref{sec:evaluation}.
For \texttt{Medium-A}, we fixed $\lambda=1$, $|\mathcal{A}^r|=300$, and $k=5$; for \texttt{Large}, we fixed $\lambda=3$, $|\mathcal{A}^r|=1000$, and $k=6$.
For each setting, we solved 20 instances and report the average normalized makespan (Makespan/LB) over snapshots.
The $k$ values were chosen because the refinement effect of DOPP is most pronounced in these regimes.

\paragraph{Results.}
\Cref{tab:pilot_l2} summarizes sensitivity to $K_{\mathrm{L2}}$ (Level~2), and
\cref{tab:pilot_l3} summarizes sensitivity to $(W_{\mathrm{L3}},K_{\mathrm{L3}})$ (Level~3).
Entries marked ``--'' are omitted: for $W_{\mathrm{L3}}=3$, the window admits only $3!=6$ permutations, so we tested only $K_{\mathrm{L3}}\in\{3,10\}$ and skipped larger values.
These results show that moderate values of $K_{\mathrm{L2}}$ and $K_{\mathrm{L3}}$ perform best.
Larger $K_{\mathrm{L2}}$ and $K_{\mathrm{L3}}$ spend too much time enumerating permutations for a single window, reducing the number of distinct candidates explored within the given refinement time budget.
Based on these results, we set $W_{\mathrm{L3}}=7$ and $K_{\mathrm{L3}}=K_{\mathrm{L2}}=10$ for experiments in \cref{sec:evaluation}.

{
\begin{table}[t]
\centering
\label{tab:pilot_l2}
\setlength{\tabcolsep}{20pt}
\renewcommand{\arraystretch}{1.0}
\begin{tabular}{lcc}
\toprule
$K_{\mathrm{L2}}$ & Medium-A & Large \\
\midrule
3   & 1.076          & \textbf{1.913} \\
10  & \textbf{1.075} & 1.922 \\
50  & 1.104          & 1.937 \\
100 & 1.103          & 1.943 \\
500 & 1.123          & 1.935 \\
\bottomrule
\end{tabular}
\caption{Results of pilot study for Level~2: sensitivity to $K_{\mathrm{L2}}$ (average Makespan/LB over 20 instances; best in bold).}
\end{table}
}
{
\begin{table}[t]
\centering
\label{tab:pilot_l3}
\setlength{\tabcolsep}{5pt}
\renewcommand{\arraystretch}{1.0}
\begin{tabular}{l lccccc}
\toprule
& & \multicolumn{5}{c}{$K_{\mathrm{L3}}$} \\
\cmidrule(lr){3-7}
& $W_{\mathrm{L3}}$ & 3 & 10 & 50 & 100 & 500 \\
\midrule
\multirow{4}{*}{Medium-A} & 3  & 1.07 & 1.06 & --   & --   & -- \\
                          & 7  & 1.06 & \textbf{1.05} & 1.06 & 1.06 & 1.06 \\
                          & 15 & 1.07 & 1.06 & 1.07 & 1.07 & 1.07 \\
                          & 50 & 1.13 & 1.13 & 1.13 & 1.13 & 1.13 \\
\midrule
\multirow{4}{*}{Large}    & 3  & 1.86 & 1.86 & -- & -- & -- \\
                          & 7  & 1.83 & 1.82 & 1.86 & 1.86 & 1.87 \\
                          & 15 & 1.82 & \textbf{1.81} & 1.83 & 1.86 & 1.87 \\
                          & 50 & 1.95 & 1.93 & 1.93 & 1.94 & 1.94 \\
\bottomrule
\end{tabular}
\caption{Results of pilot study for Level~3: sensitivity to $(W_{\mathrm{L3}},K_{\mathrm{L3}})$ (average Makespan/LB over 20 instances; best in bold, ``--'' indicates omitted combinations).}
\end{table}
}
}

\end{document}